\newcolumntype{C}[1]{>{\centering\let\newline\\\arraybackslash\hspace{0pt}}m{#1}}
\newtheorem{definition}{Definition}[section]
\newtheorem{lemma}[definition]{Lemma}
\newtheorem{theorem}[definition]{Theorem}
\newtheorem{corollary}[definition]{Corollary}
\newtheorem{conjecture}[definition]{Conjecture}
\newcommand{\bigo}{\mathcal{O}}
\newcommand{\ham}{\mathrm{Ham}}
\newcommand{\runs}{\mathrm{runs}}
\newcommand{\enc}[2]{\big\{#1\big\}_{#2}}
\newcommand{\polylog}{\,\mathrm{polylog}\,}
\title{Optimal trade-offs for pattern matching\\ with $k$ mismatches}
\author[1]{Pawe\l{} Gawrychowski}
\affil[1]{Haifa University, Israel}
\author[2]{Przemys\l{}aw~Uzna\'nski}
\affil[2]{
ETH Z\"urich, Switzerland}
\date{}
\begin{document}
\maketitle

\setcounter{page}{1}

\begin{abstract}
Given a pattern of length $m$ and a text of length $n$, the goal in $k$-mismatch pattern matching
is to compute, for every $m$-substring of the text, the exact Hamming distance to the pattern
or report that it exceeds $k$. This can be solved in either $\widetilde\bigo(n \sqrt{k})$ time as shown
by Amir et al. [J. Algorithms 2004] or $\widetilde\bigo((m + k^2) \cdot n/m)$ time due to a result
of Clifford et al. [SODA 2016]. We provide a smooth time trade-off between these two bounds by
designing an algorithm working in time $\widetilde\bigo( (m + k \sqrt{m}) \cdot n/m)$.
We complement this with a matching conditional lower bound, showing that a significantly faster
\emph{combinatorial} algorithm is not possible, unless the combinatorial matrix multiplication
conjecture fails.
\end{abstract}



\section{Introduction}

The basic question in algorithms on strings is pattern matching, which asks for reporting (or detecting)
occurrences of the given pattern in the text. This fundamental question comes in multiple shapes
and colors, starting from the exact version considered already in the 70s~\cite{KMP}. Here we are
particularly interested in the approximate version, where the goal is to detect fragments of the text
that are \emph{similar} to the text. Two commonly considered variants of this question is pattern
matching with $k$ errors and pattern matching with $k$ mismatches. In the former, we are looking
for a fragment with edit distance at most $k$ to the pattern, while in the latter we are interested
in a fragment that differs from the pattern on up to $k$ positions (and has the same length).
The classical solution by Landau and Vishkin \cite{LandauV86} solves pattern matching with $k$ mismatches
in $\bigo(nk)$ time for a text of length $n$. For larger values of $k$, Abrahamson~\cite{Abrahamson87}
showed how to compute the number of mismatches between every fragments of the text and the
pattern of length $m$ in total $\bigo(n\sqrt{m\log m})$ time with convolution. Later, Amir et al.~\cite{AmirLP04}
combined both approaches to achieve $\bigo(n\sqrt{k\log k})$ time.

An obvious and intriguing question is what are the best possible time bounds for pattern matching
with $k$ mismatches. An unpublished result attributed to Indyk~\cite{Clifford} is that, if we are interested in
counting mismatches for every position in the text, then this is at least as difficult as multiplying
boolean matrices. In particular, it implies that one should not hope to significantly improve on the
$\bigo(n\sqrt{m})$ time complexity of an \emph{combinatorial} algorithm. However, this is not
sensitive to the bound $k$ on the number of mismatches. In a recent breakthrough, Clifford et
al.~\cite{CliffordFPSS16} introduced a new repertoire of tools and showed an
$\bigo((k^{2}\log k + m\polylog m)\cdot n/m)$ time algorithm.
In particular, this is near linear-time for $k=\bigo(\sqrt{m})$ and improves on the previous
algorithm of Amir et al.~\cite{AmirLP04} that runs in $\bigo(n/m \cdot (k^{3}\log k + m))$ time.

\paragraph{Results. }
 We provide a smooth transition between the $\widetilde\bigo(n\sqrt{k})$ time algorithm of
 Amir et al.~\cite{AmirLP04} and the $\widetilde\bigo((m+k^{2})\cdot n/m)$ solution given by
 Clifford et al.~\cite{CliffordFPSS16}. The running time of our algorithm is $\widetilde\bigo((m+k\sqrt{m})\cdot n/m)$.
 This matches the previous solution at the extreme points $k=\bigo(\sqrt{m})$ and $k=\Omega(m)$,
 but provides a better trade-off in-between. Furthermore, we prove that such transition
 is essentially the best possible. More precisely, we complement the algorithm with a matching
 conditional lower bound, showing that a significantly faster \emph{combinatorial} algorithm is not possible,
 unless the popular combinatorial matrix multiplication conjecture fails.
 
 \begin{figure}
\centering\includegraphics[width=0.5\textwidth]{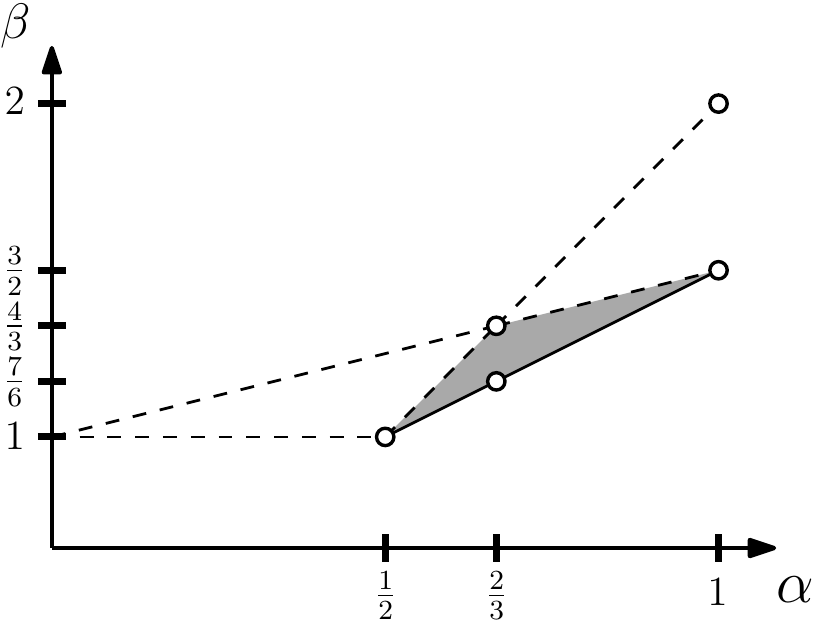}
\caption{Running time $T = m^\beta$ on instances with $n = \Theta(m)$ and $k = m^\alpha$.
Previous algorithms are represented by dashed lines and our algorithm is represented by solid line.
For example, for $k = \Theta(m^{2/3})$ we improve the complexity from $\widetilde\bigo(m^{4/3})$
to $\widetilde\bigo(m^{7/6})$. }
\end{figure}

\paragraph{Related work.}
Landau and Vishkin \cite{LandauV86} solve pattern matching with $k$ mismatches by checking
every possible alignment with $k+1$ constant-time longest common extension queries (also
known as ``kangaroo jumps''). The main idea in all the subsequent improvements is to use convolution,
which essentially counts matches generated by a particular letter with a single FFT in time close
to linear. Both Abrahamson \cite{Abrahamson87} and Amir et al.~\cite{AmirLP04} use convolution for letters often occurring in the
pattern. Convolution is also used (together with random projections $\Sigma \to \{0,1\}$ that can be
derandomized with an extra $\bigo(\log n)$ factor) by Karloff \cite{Karloff93} for approximate
mismatches counting.

At a very high level, Clifford et al.~\cite{CliffordFPSS16} obtain the improved time complexity by
partitioning both the pattern and the text into $\bigo(k)$ subpatterns and subtexts, such that the
total number of blocks in their RLE is small. Resulting $\bigo(k^2)$ instances of RLE pattern matching
with mismatches are then solved in $\bigo(k^2)$ total time, leading to an $\widetilde\bigo((k^2+ m)\cdot n/m)$
time algorithm for the original problem.

\paragraph{Overview of the techniques.}
We observe that the reduction from \cite{CliffordFPSS16} can be done so that, instead of many small instances,
we end up with a \emph{single} new instance of $\bigo(k)$-mismatch 
pattern matching. The resulting new pattern and text have RLE consisting of $\bigo(k)$ blocks
and the problem is reduced to RLE pattern matching with $k$ mismatches.
Since for RLE pattern matching with mismatches there is a matching quadratic conditional lower bound
(by reducing from the 3SUM problem), it might seem that no improvement here is possible without
making a significant breakthrough.

We show that this is not necessarily the case, by leveraging that the RLE strings are compressed version of strings 
of  $\bigo(m)$ length. Thus, letters that appear in only a few blocks of the compressed
pattern can be treated in a fashion similar to \cite{AmirLP04} by producing a representation of all matches
generated by a block in the compressed pattern against a block in the compressed text, in constant time
per a pair of blocks. For letters that appear in many blocks, we can essentially ``uncompress'' the corresponding
fragment of the pattern, and run the classical convolution, taking advantage of the fact that uncompressed 
versions are of length $\bigo(m)$. Setting threshold appropriately, we solve the obtained of RLE pattern
matching in time $\widetilde\bigo(k \sqrt{m})$ time. All in all, we obtain an
$\widetilde\bigo((m+k\sqrt{m})\cdot n/m)$ time solution to the original problem.

\section{Upper bound}

The goal of this section is to prove the following theorem:

\begin{theorem}
$k$-mismatch pattern matching can be solved in time $\bigo(n/m \cdot (m \log^2 m \log |\Sigma| + k \sqrt{m \log m}))$.
\end{theorem}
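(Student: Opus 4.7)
The plan is first to reduce to the balanced case $n = \Theta(m)$ by partitioning the text into $\bigo(n/m)$ overlapping windows of length $2m$ and running the algorithm on each, so it suffices to achieve running time $\bigo(m\log^{2}m\log|\Sigma| + k\sqrt{m\log m})$ per window. After that, I would, as sketched in the overview, strengthen the Clifford et al.\ \cite{CliffordFPSS16} decomposition so that, instead of producing $\bigo(k^{2})$ separate tiny RLE instances, it produces a \emph{single} instance of $\bigo(k)$-mismatch pattern matching on a new pattern $P'$ and text $T'$, both of uncompressed length $\bigo(m)$, whose run-length encodings consist of only $\bigo(k)$ runs each. The $\bigo(m\log^{2}m\log|\Sigma|)$ term is to be absorbed by this preprocessing (period analysis, FFT-based correlations for the low-mismatch case, and gluing of partial results), and answering the original problem is to be reduced to solving this single RLE instance.

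The core combinatorial step is then to solve RLE $\bigo(k)$-mismatch pattern matching on $P',T'$ in time $\bigo(k\sqrt{m\log m})$. Set a threshold $\tau = \sqrt{m\log m}$ and call a letter $c$ \emph{rare} if it occurs in at most $\tau$ runs of $P'$, and \emph{heavy} otherwise. Let $b^{P'}_c$ and $b^{T'}_c$ count the runs of $c$ in $P'$ and $T'$; then $\sum_c b^{P'}_c = \bigo(k)$ and $\sum_c b^{T'}_c = \bigo(k)$, so there are only $\bigo(k/\tau)$ heavy letters. For each rare letter and each pair consisting of a $c$-run in $P'$ and a $c$-run in $T'$, the contribution to the number of matches as a function of the alignment shift $i$ is the length of the intersection of two intervals offset by $i$, a trapezoidal function supported on $\bigo(1)$ breakpoints. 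Such a function can be added to an alignment-indexed array of length $\bigo(m)$ in $\bigo(1)$ amortized time using second-order difference markers, with a single prefix-sum pass at the very end. The total number of (run,run) pairs over all rare letters is at most $\sum_{c\text{ rare}} b^{P'}_c b^{T'}_c \leq \tau \sum_c b^{T'}_c = \bigo(\tau k)$.

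For heavy letters I would explicitly uncompress, for each such letter $c$, the $\{0,1\}$ indicator strings of $c$ in $P'$ and $T'$ (of total length $\bigo(m)$) and compute their correlation with a single FFT in $\bigo(m\log m)$ time, adding the result into the global alignment-indexed match array. All heavy letters together then cost $\bigo((k/\tau)m\log m)$. With $\tau = \sqrt{m\log m}$ the two contributions balance to $\bigo(k\sqrt{m\log m})$, matching the second term of the claim. Finally, the Hamming distance at each alignment is $m$ minus the match count, and truncating to $k$ is trivial.

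The main obstacle I foresee is the first step: extending the Clifford et al.\ reduction so that it outputs a \emph{single} RLE instance of the required shape (total uncompressed length $\bigo(m)$, total number of RLE runs $\bigo(k)$, and mismatch budget $\bigo(k)$) rather than many small instances, while keeping the preprocessing cost within $\bigo(m\log^{2}m\log|\Sigma|)$ and ensuring that the Hamming distances in the original problem are exactly recoverable from those of the new instance. Once that reduction is in place, the rare/heavy letter dichotomy is a standard heavy-light trick and goes through by the balance computation above.
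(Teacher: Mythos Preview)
Your proposal is correct and follows essentially the same route as the paper: the paper also reduces to windows of length $2m$, uses Karloff's approximate algorithm for the period analysis (accounting for the $m\log^{2}m\log|\Sigma|$ term and dispatching the no-small-period case via kangaroo jumps), and in the small-period case constructs a single rearranged pair $(T^\star,P^\star)$ of uncompressed length $\bigo(m)$ with $\bigo(k)$ RLE runs, after which it applies exactly your heavy/light dichotomy with threshold $\sqrt{m\log m}$ and the second-difference trick for rare-letter run pairs. The step you flag as the main obstacle is precisely the paper's contribution, realized by the $\ell$-encoding interleaving of the kernelized text and pattern.
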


We begin with the standard trick of reducing the problem to $\lceil n/m \rceil$ instances of matching a
pattern $P$ of length $m$ to a text $T$ of length $2m$ and work with such formulation from now on.
Therefore, the goal now is to achieve $\bigo(m \log^2 m \log |\Sigma| + k \sqrt{m \log m}))$ complexity.

We start by highlighting the kernelization technique of Clifford et al.~\cite{CliffordFPSS16}.
An integer $\pi>0$ is an $x$-period of a string $S[1,m]$ if $\ham(S[\pi,m-1],S[0,m-1-\pi]) \le x$ (cf. Definition 1 in \cite{CliffordFPSS16}).
Note that compared to the original formulation, we drop the condition that $\pi$ is minimal
from the definition.

\begin{lemma}[Fact 3.1 in \cite{CliffordFPSS16}]
\label{lem:filtration}
If the minimal $2x$-period of the pattern is $\ell$, then the starting positions of any two occurrences
with $k$ mismatches of the pattern are at distance at least $\ell$.
\end{lemma}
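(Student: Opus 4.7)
The plan is to reduce the statement to the standard combinatorial fact that any shift $d>0$ between two occurrences of $P$ in $T$ with at most $k$ mismatches each is itself a $2k$-period of $P$. Once this is established, under the (implicit) assumption $x\ge k$, a $2k$-period is automatically a $2x$-period, so minimality of $\ell$ as a $2x$-period forces $d\ge \ell$.

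First, fix two occurrences at positions $i < j$ with at most $k$ mismatches each and set $d=j-i$. The two alignments of $P$ jointly cover the text window $T[j \,..\, i+m-1]$ of length $m-d$. At every text position $j+r$ in this overlap (for $0\le r<m-d$) the first alignment compares $P[d+r]$ to $T[j+r]$ while the second compares $P[r]$ to $T[j+r]$.

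Second, the key observation: if $P[d+r]\neq P[r]$ then at most one of them equals $T[j+r]$, so this text position contributes a mismatch to at least one of the two occurrences. Summing over $0\le r<m-d$,
\[
\ham\!\bigl(P[d \,..\, m-1],\, P[0 \,..\, m-1-d]\bigr) \;\le\; M_i + M_j \;\le\; k + k \;=\; 2k,
\]
where $M_i,M_j$ count the mismatches of the respective occurrences restricted to the overlap (each bounded by the total $k$ mismatches of that occurrence). Thus $d$ is a $2k$-period of $P$, hence a $2x$-period, hence $d\ge \ell$ by the minimality of $\ell$.

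The main care point, rather than a real obstacle, is keeping the indexing convention of the $x$-period definition consistent with the overlap bookkeeping: the definition in the excerpt compares the suffix $S[\pi,m-1]$ with the prefix $S[0,m-1-\pi]$, which matches exactly the pair $(P[d+r],P[r])$ that arises above. No FFT or data-structural ingredients are needed; the argument is a one-line union bound on mismatches in the overlap.
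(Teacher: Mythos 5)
Your proof is correct, and it is the standard argument: the paper itself gives no proof of this lemma (it is imported verbatim as Fact~3.1 of Clifford et al.), and your overlap/union-bound reasoning showing that the shift $d$ between two $k$-mismatch occurrences satisfies $\ham(P[d,m-1],P[0,m-1-d])\le 2k$ is exactly the intended justification. You also correctly flag the only delicate point, namely that the statement's $x$ must be read as $x\ge k$ (in the source, $x=k$), since otherwise a $2k$-period need not be a $2x$-period; with that reading, minimality of $\ell$ immediately gives $d\ge\ell$, and the degenerate case $d\ge m$ is vacuous because then $d\ge m\ge\ell$.
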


The first step of algorithm is to determine the minimal $\bigo(k)$-period of the pattern. More specifically,
we run the $(1+\varepsilon)$-approximate algorithm of Karloff \cite{Karloff93} with $\varepsilon=1$
matching the pattern $P$ against itself. This takes $\bigo(m \log^2m \log |\Sigma|)$ time and, by looking at
the approximate outputs for offsets not larger than $k$, allows us to distinguish between two cases:
\begin{itemize}
\item every $2k$-period of the pattern is at least $k$, or
\item there is  a $4k$-period $\ell \le k$ of the pattern.
\end{itemize}
Then we run the appropriate algorithm as described below.

\paragraph{No small $2k$-period.}
We again run Karloff's algorithm with $\varepsilon=1$, but now we match the pattern with the text.
We look for positions $i$ where the approximate algorithm reports at most $k$ mismatches, meaning
that $\ham(P, T[i\ ..\ i+m-1]) \le 2k$. By Lemma~\ref{lem:filtration}, there are $\bigo(m/k)$ such positions,
and we can safely discard all other positions. Then, we test every such position using the ``kangaroo jumps''
technique of Landau and Vishkin~\cite{LandauV86}, using $\bigo(k)$ constant-time operations per position,
in total $\bigo(m)$ time.

\paragraph{Small $4k$-period.}
Let $\ell \le k$ be any $4k$-period of the pattern. 
For a string $S$ and $0 \le i < \ell$, let $\enc{S}{\ell,i} = S[i]S[i+\ell]S[i+2\ell]\ldots$ up until end of $S$.
We denote by $\enc{S}{\ell}$ an $\ell$-encoding of  $S$, that is the string 
$\enc{S}{\ell,1}\enc{S}{\ell,2}\ldots \enc{S}{\ell,\ell-1}$.
Let $\runs(S)$ be the number of runs in $S$. Denote $\runs_{\ell}(S) = \sum_{i=1}^{\ell} \runs(\enc{S}{\ell,i})$,
and observe that it upperbounds the number of runs in $\enc{S}{\ell}$.

\begin{lemma}[Lemma 6.1 in \cite{CliffordFPSS16}]
If $P$ has a $4k$-period not exceeding $k$, then $\runs_{\ell}(P) \le 5k$.
\end{lemma}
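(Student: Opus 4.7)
The plan is to unpack what it means for $\ell$ to be a $4k$-period and to count run boundaries directly in the shift-by-$\ell$ self-alignment of $P$. By definition, $\ham(P[\ell, m-1], P[0, m-1-\ell]) \le 4k$, which is precisely the number of indices $j$ with $0 \le j \le m-1-\ell$ and $P[j] \ne P[j+\ell]$. The key structural observation I want to exploit is that these mismatches are exactly the positions where consecutive characters differ inside the $\ell$ encodings $\enc{P}{\ell, 0}, \ldots, \enc{P}{\ell, \ell-1}$.

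To make this precise, I would write each $j \in [0, m-1-\ell]$ uniquely as $j = i + t\ell$ with $0 \le i < \ell$, and note that $P[j]$ and $P[j+\ell]$ are exactly the $t$-th and $(t+1)$-st characters of $\enc{P}{\ell, i}$. A mismatch $P[j] \ne P[j+\ell]$ therefore contributes exactly one position in $\enc{P}{\ell, i}$ at which consecutive characters disagree, i.e., a run boundary; conversely every run boundary in any $\enc{P}{\ell, i}$ arises from a unique such $j$. Summing over $i$, the total number of run boundaries across all $\ell$ encodings equals $\ham(P[\ell, m-1], P[0, m-1-\ell]) \le 4k$.

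To conclude, a nonempty string has exactly one more run than it has such consecutive-character boundaries. Summing gives $\runs_{\ell}(P) = \sum_{i=0}^{\ell-1}\runs(\enc{P}{\ell, i}) \le 4k + \ell$, and since the hypothesis supplies $\ell \le k$, this is at most $5k$.

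I do not expect a genuine obstacle here: the only subtleties are the $+1$ bookkeeping between runs and boundaries inside each encoding, and the trivial edge case where some $\enc{P}{\ell, i}$ may be empty (contributing $0$, which only tightens the bound). The whole argument rests on the bijection between shift-mismatches and run boundaries in the encodings, which is where I would focus the written proof.
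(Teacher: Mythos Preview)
Your argument is correct. Note that the present paper does not actually prove this lemma; it is quoted verbatim as Lemma~6.1 from Clifford et al.\ and used as a black box. Your proof is the standard one: the bijection between positions $j$ with $P[j]\ne P[j+\ell]$ and run boundaries inside the strings $\enc{P}{\ell,i}$ is exactly right, and the final accounting $\runs_\ell(P)\le 4k+\ell\le 5k$ is clean. There is nothing to compare against here, and nothing to fix.
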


We proceed with the kernelization argument. Let $T_L$ be the longest suffix of $T[0,m-1]$
such that $\runs_{\ell}(T_L) \le 6k$. Similarly, let $T_R$ be the longest prefix of $T[m,2m-1]$
such that $\runs_{\ell}(T_R) \le 6k$. Let $T' = T_LT_R$. Obviously, $\runs_{\ell}(T') \le 12k$.

\begin{lemma}[Lemma 6.2 in \cite{CliffordFPSS16}]
Every $T[i,i+m-1]$ that is an occurrence of $P$ with $k$ mismatches is fully contained in $T'$.
\end{lemma}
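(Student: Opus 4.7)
\textit{Proof plan.} Decompose $W = T[i,i+m-1]$ as $W = T_L' T_R'$ with $T_L' = T[i,m-1]$ a suffix of $T[0,m-1]$ and $T_R' = T[m,i+m-1]$ a prefix of $T[m,2m-1]$. The inclusion $W \subseteq T' = T_L T_R$ is equivalent to $|T_L'| \le |T_L|$ and $|T_R'| \le |T_R|$, and by the maximality built into the definitions of $T_L$ and $T_R$ it suffices to prove $\runs_\ell(T_L') \le 6k$ and $\runs_\ell(T_R') \le 6k$.

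The main inequality is obtained by a per-shift mismatch count in the $\ell$-encoding. For each $j \in \{0,\ldots,\ell-1\}$, the strings $\enc{W}{\ell,j}$ and $\enc{P}{\ell,j}$ have equal length and differ in some $k_j$ positions with $\sum_j k_j \le k$; since altering a single character can increase the run count by at most $2$, one has $\runs(\enc{W}{\ell,j}) \le \runs(\enc{P}{\ell,j}) + 2k_j$. Localising this on the two halves of $W$, aligned with the prefix $P_L = P[0,|T_L'|-1]$ and suffix $P_R = P[|T_L'|,m-1]$ of $P$ (with $k_L + k_R \le k$), yields
\[
\runs_\ell(T_L') \le \runs_\ell(P_L) + 2k_L, \qquad \runs_\ell(T_R') \le \runs_\ell(P_R) + 2k_R.
\]
Combining with the previous lemma $\runs_\ell(P) \le 5k$ and the split identity $\runs_\ell(P_L) + \runs_\ell(P_R) \le \runs_\ell(P) + \ell \le 5k + k = 6k$ (at most one run per shift is lost to the concatenation, and $\ell \le k$ in this branch of the algorithm) bounds the sum $\runs_\ell(T_L') + \runs_\ell(T_R')$ by $8k$, which already rules out the case where both pieces simultaneously exceed the threshold $6k$.

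The main obstacle is the one-sided failure mode, where one of $\runs_\ell(T_L'), \runs_\ell(T_R')$ exceeds $6k$ while the other drops well below $2k$. I would exclude this by sharpening the per-piece estimate using the approximate $\ell$-periodicity of $P$: a violation $\runs_\ell(T_R') > 6k$ forces $k_R > k/2$, but the $4k$-periodic structure of $P$ implies that $\runs_\ell(P_R)$ is considerably smaller than the generic $5k$ when the matching budget is mostly spent on the corresponding half, which pushes $\runs_\ell(T_R')$ back below $6k$. Everything else in the argument — the split identities, the per-shift mismatch bookkeeping, and the monotonicity of $\runs$ under taking prefixes or suffixes — is routine manipulation of the $\ell$-encoding, and this constant-tightening step is the delicate point that makes the $6k$ threshold work.
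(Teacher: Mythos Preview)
The paper does not prove this lemma; it is quoted verbatim from Clifford et al.\ and used as a black box. So there is no ``paper's own proof'' to compare against here. That said, your outline is the standard one and is essentially what the cited proof does: split $W=T[i,i+m-1]$ at position $m$, reduce via the maximality of $T_L,T_R$ to the two run bounds $\runs_\ell(T_L')\le 6k$ and $\runs_\ell(T_R')\le 6k$, and control each via $\runs_\ell(\cdot)\le\runs_\ell(\text{corresponding piece of }P)+2\cdot(\text{mismatches there})$.

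The gap is in your ``sharpening'' step. Your claim that ``the $4k$-periodic structure of $P$ implies that $\runs_\ell(P_R)$ is considerably smaller than the generic $5k$ when the matching budget is mostly spent on the corresponding half'' is not a valid inference: $\runs_\ell(P_R)$ is a function of $P$ and of the cut point $|T_L'|=m-i$ only; it has no dependence whatsoever on how the $k$ mismatches are distributed between the two halves. So the one-sided failure mode you identify is not excluded by the argument you sketch, and the proof as written does not close.

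What actually works is simpler and sidesteps the split of the mismatch budget entirely. Since $T_L'$ is a \emph{prefix} of $W$, each $\enc{T_L'}{\ell,j}$ is a prefix of $\enc{W}{\ell,j}$, whence $\runs_\ell(T_L')\le\runs_\ell(W)$; and $\runs_\ell(W)\le\runs_\ell(P)+2k\le 5k+2k=7k$. The same holds for $T_R'$ as a suffix of $W$. This yields the bound with the constant $7k$ rather than $6k$. The discrepancy with the $6k$ stated here is a constant-transcription issue between this survey restatement and the original; the algorithm only needs the threshold to be $\Theta(k)$, so nothing downstream changes. In short: drop the sharpening attempt, use the monotonicity $\runs_\ell(\text{prefix/suffix of }W)\le\runs_\ell(W)$ directly, and accept the constant $7k$.
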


Thus we see that $k$-mismatch pattern matching is reduced to a kernel where the $\ell$-encoding
of both the text and the pattern have few runs, that is, compress well with RLE.

From now on assume that both $T'$ and $P$ are of lengths divisible by $\ell$. If it is not the case, we can pad them separately with at most $\ell-1 < k$ characters each, not changing the complexity of our solution. 
Let $m_1$ and $m_2$ be integers such that $m_1 \cdot \ell = |T'|$ and $m_2 \cdot \ell = |P|$, $m_1 \ge m_2$.

\newcommand{\ddd}{\makebox[5.75pt]{\$}}
\begin{figure}[t]
\label{fig:rearrange}
\begin{minipage}{.5\textwidth}
$$T' = \mathtt{hokuspokusopensezame}$$
$$
\begin{array}{|C{6pt}|C{6pt}|C{6pt}|C{6pt}|C{6pt}|}
\hline
\texttt{h}&\texttt{s}&\texttt{u}&\texttt{e}&\texttt{z}\\
\hline
\texttt{o}&\texttt{p}&\texttt{s}&\texttt{n}&\texttt{a}\\
\hline
\texttt{k}&\texttt{o}&\texttt{o}&\texttt{s}&\texttt{m}\\
\hline
\texttt{u}&\texttt{k}&\texttt{p}&\texttt{e}&\texttt{e}\\
\hline
\end{array}
\quad
\begin{array}{|C{6pt}|C{6pt}|C{6pt}|C{6pt}|C{6pt}|}
\hline
\texttt{s}&\texttt{u}&\texttt{e}&\texttt{z}&\texttt{\#}\\
\hline
\texttt{p}&\texttt{s}&\texttt{n}&\texttt{a}&\texttt{\#}\\
\hline
\texttt{o}&\texttt{o}&\texttt{s}&\texttt{m}&\texttt{\#}\\
\hline
\texttt{k}&\texttt{p}&\texttt{e}&\texttt{e}&\texttt{\#}\\
\hline
\end{array}
$$
\end{minipage}
\begin{minipage}{.5\textwidth}
$$P = \mathtt{abracadabra}$$
$$\begin{array}{|C{6pt}|C{6pt}|C{6pt}|C{6pt}|C{6pt}|}
\hline
\texttt{a}&\texttt{c}&\texttt{b}&\texttt{\$}&\texttt{\$}\\
\hline
\texttt{b}&\texttt{a}&\texttt{r}&\texttt{\$}&\texttt{\$}\\
\hline
\texttt{r}&\texttt{d}&\texttt{a}&\texttt{\$}&\texttt{\$}\\
\hline
\texttt{a}&\texttt{a}&\texttt{\$}&\texttt{\$}&\texttt{\$}\\
\hline
\end{array}$$
\end{minipage}
\begin{align*}
T^\star\quad=&\quad\mathtt{hsuez\ opsna\ koosm\ ukpee\ suez\#\ psna\#\ oosm\#\ kpee\#}\\
P^\star\quad=&\quad\mathtt{acb\ddd\ddd\ bar\ddd\ddd\ rda\ddd\ddd\ aa\ddd\ddd\ddd}\\
\end{align*}
\caption{Example of rearranging of text and pattern, with parameter $\ell = 4$.}
\end{figure}


We rearrange both $P$ and $T'$ to take advantage of their regular structure. That is, we
define $T^\star = \enc{T'}{\ell}\enc{T''}{\ell}$, where $T'' = T'[ \ell+1, m_1 \cdot \ell]\ \#^\ell$. Observe that $T^\star$ is a word
of length $2 m_1 \cdot \ell$, composed first of $m_1$ blocks of the form $T'[i]T'[i+\ell]\ldots T'[i+(m_1-1)\ell]$
for $0 \le i < \ell$, and then of $m_1$ blocks of the form $T'[i+\ell]\ldots T'[i+(m_1-1)\ell]\ \#$.  

Similarly, we define $P^\star = \enc{P\ \$^{(m_1-m_2)\ell}}{\ell}$. Again we observe that $P^\star$ is the word of length $m_1\cdot \ell$, composed of blocks of the form
$P[i]P[i+\ell]\ldots P[i+(m_2-1)\ell]\ \$^{m_1-m_2}$ for $0 \le i < \ell$.  Example of this reduction is presented on Figure~\ref{fig:rearrange}.

Next we show that $T^\star$ and $P^\star$ maintain the Hamming distance between any possible alignment of $T'$ and $P$.

\begin{lemma}
For any integer $0 \le \alpha \le (m_1-m_2) \ell$, let $x = \lfloor \alpha / \ell \rfloor$ and $y = \alpha \bmod \ell$. Let $\beta = x + y \cdot m_1$. Then
$$\ham(T'[\alpha,\alpha + m_2\cdot \ell - 1],P)  = \ham(T^\star[\beta,\beta+m_1 \cdot \ell - 1], P^\star)-(m_1-m_2) \cdot \ell.$$
\end{lemma}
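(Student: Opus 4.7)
The plan is to set up an explicit correspondence between positions $j \in [0, m_2\ell)$ in the $P$-to-$T'$ comparison and positions $j' \in [0, m_1\ell)$ in the $P^\star$-to-$T^\star$ comparison, and check that the Hamming contributions match up to the padding term. First I would parametrise $j' = r m_1 + s$ with $0 \le r < \ell$ and $0 \le s < m_1$, which factors $P^\star[j']$ as $P[r + s\ell]$ when $s < m_2$ and $\$$ when $s \ge m_2$. Computing $\beta + j' = (y+r)\, m_1 + (x+s)$ then locates the corresponding position in $T^\star$ at offset $x+s$ within block $y+r$ of the concatenation $\enc{T'}{\ell}\enc{T''}{\ell}$.

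The argument splits into two regimes. For the \emph{padded positions} $s \ge m_2$, of which there are exactly $(m_1 - m_2)\ell$, the symbol $\$$ in $P^\star$ is absent from $T^\star$ (whose alphabet is $\Sigma \cup \{\#\}$), so each such comparison contributes one mismatch, accounting precisely for the $-(m_1 - m_2)\ell$ correction. For the \emph{real positions} $s < m_2$, the bound $\alpha \le (m_1 - m_2)\ell$ yields $x + s \le m_1 - 1$, so $\beta + j'$ indeed lies at offset $x+s$ inside block $y+r$ of $T^\star$. When $y + r < \ell$ this block is in $\enc{T'}{\ell}$ and its entry decodes to $T'[(y+r) + (x+s)\ell] = T'[\alpha + j]$. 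When $y + r \ge \ell$ the block sits at index $y+r-\ell$ of $\enc{T''}{\ell}$; since $T''$ is $T'$ shifted by $\ell$ (then padded with $\#$), the entry at offset $x+s$ equals $T'[(y+r-\ell) + (x+s+1)\ell] = T'[\alpha + j]$. Either way, the comparison at $j'$ agrees with the one at $j = s\ell + r$.

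The one delicate point, and the main obstacle, is the edge case $x + s = m_1 - 1$ together with $y + r \ge \ell$, which would read the trailing $\#$-sentinel in $\enc{T''}{\ell}$ instead of a genuine character of $T'$. I would rule this out directly: $s \le m_2 - 1$ and $x \le m_1 - m_2$ combined with $x + s = m_1 - 1$ force $s = m_2 - 1$ and $x = m_1 - m_2$; the constraint $\alpha \le (m_1 - m_2)\ell$ then forces $y = 0$, so $y + r \ge \ell$ would demand $r \ge \ell$, contradicting $r < \ell$. With this case excluded, summing the contributions of the two regimes produces the stated identity.
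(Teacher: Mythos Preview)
Your argument is correct and follows essentially the same route as the paper: both set up the bijection between positions $s\ell+r$ in $P$ and $rm_1+s$ in $P^\star$ (the paper's $(i,j)$ are your $(s,r)$), split according to whether $y+r<\ell$ or $y+r\ge\ell$ to decode the corresponding character of $T^\star$ from $T'$ or $T''$, and account for the $(m_1-m_2)\ell$ forced mismatches against $\$$. The one substantive difference is that you explicitly verify the boundary case $x+s=m_1-1$ with $y+r\ge\ell$ cannot occur (so the $\#$ sentinel in $T''$ is never read on a real position), whereas the paper's proof silently assumes $T''[(x+i)\ell+(y+j-\ell)]=T'[(x+i+1)\ell+(y+j-\ell)]$ without checking that the index stays below $(m_1-1)\ell$; your treatment is therefore slightly more careful, but the underlying idea is identical.
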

\begin{proof}
Observe that 
\begin{equation}
\label{eq:ham}
\ham(T'[\alpha,\alpha + m_2\cdot \ell - 1],P) = \sum_{i = 0}^{m_2  -1}  \sum_{j=0}^{\ell-1} \delta(T'[x \ell + y + i\ell+j], P[i\ell+j]), \end{equation}
where $\delta$ is indicator of character inequality.
Observe  that $P[i\ell+j]) = P^\star[i+j\cdot m_1]$, for $0 \le j < \ell - y$ ther is $T'[x \ell + y + i\ell+j] = T^\star[(x+i) + (y +j)m_1]$, and for $\ell - y \le j < \ell$ there is $T'[x \ell + y + i\ell+j] = T''[(x+i)\ell + (y+j-\ell)] = T^\star[(x+i) + (y+j-\ell)m_1 + \ell m_1] = T^\star[(x+i) + (y +j)m_1]$. Additionally, for $m_2 \le i < m_1$, $P^\star[i + j \cdot m_1] = \$$, which always generates a mismatch with any character in $T^\star$. Thus
\begin{align*}
\eqref{eq:ham} =& \sum_{i = 0}^{m_2 -1}  \sum_{j=0}^{\ell-1} \delta(T^\star[(x+i) + (y +j)m_1],P^\star[i + j \cdot m_1]) =\\
=& - (m_1-m_2)\ell + \sum_{i = 0}^{m_1 -1}  \sum_{j=0}^{\ell-1} \delta(T^\star[(x+i) + (y +j)m_1],P^\star[i + j \cdot m_1]), \tag*{\qedhere} 
\end{align*}
\end{proof}

We see that it is enough to find all occurrences of $P^\star$ in $T^\star$ with
$(k+(m_1-m_2)\cdot\ell)$ mismatches, where $k+(m_1-m_2)\ell \le 2k$,  $|P^\star| = |T'| \le m$ and $|T^\star| = 2|T'| \le 2m$. Additionally, $\runs(P^\star) \le 5k+\ell \le 6k$ and $\runs(T^\star) \le 12k + \ell \le 13k$.

Now we describe how to solve the kernelized problem exactly (where we count matches/mismatches for all possible alignments, not just detect occurrences with up to $k$ mismatches), using the stated properties of $T^\star$ and $P^\star$.

Consider a letter $c \in \Sigma$. For a string $S$, we denote by $\runs(S,c)$ the number of runs in $S$
consisting of occurrences of $c$.
Fix a parameter $t$. Call a letter $c$ such that $\runs(P^\star,c) > t$ a heavy letter, and otherwise call it
light. Now we describe how to count the number of mismatches for each type of letters. This is reminiscent
to a trick originally used by Abrahmson~\cite{Abrahamson87} and later refined by Amir et al.~\cite{AmirLP04}.

\paragraph{Heavy letters.}

For every heavy letter $c$ separately we use a convolution scheme. Since both $P^\star$ and $T^\star$ are of
size $\bigo(m)$, this takes time $\bigo(m \log m)$ per every such letter.
Since $\sum_{c \in \Sigma} \runs(P^\star,c) = \runs(P^\star) \le 6k$, there are $\bigo(k/t)$ heavy letters, making
the total time $\bigo(mk \log m / t)$.

\paragraph{Light letters.}
First, we preprocess $P^\star$, and for every light letter $c$ we compute a list of runs consisting of
occurrences of $c$. Our goal is to compute the array $A[0, |T^\star|-|P^\star|]$, where $A[i]$ counts
the number of matching occurrences of light letters in $T^\star[i,i+|P^\star|-1]$ and $P^\star$.

\begin{figure}[t]
\begin{minipage}{.5\textwidth}

\centering\includegraphics[width=0.75\textwidth]{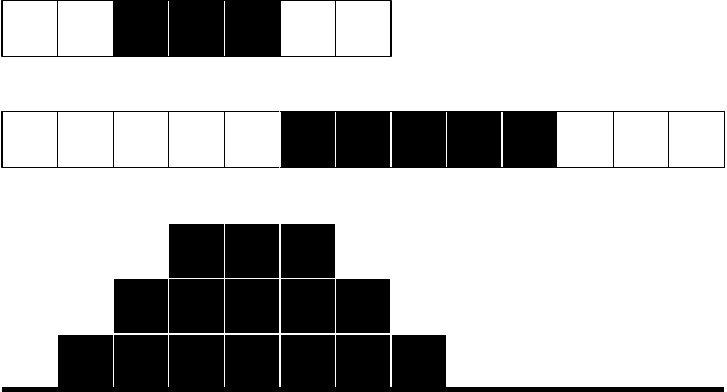}

\end{minipage}
\begin{minipage}{.5\textwidth}
\vspace{-0.5cm}
\begin{align*}
DA &= \begin{array}{|C{6pt}|C{6pt}|C{6pt}|C{6pt}|}
\hline
1 & 0 & 0 &\!-1\!\\
\hline
\end{array}\\
\\
DB &= \begin{array}{|C{6pt}|C{6pt}|C{6pt}|C{6pt}|C{6pt}|C{6pt}|}
\hline
1 & 0 & 0 & 0 & 0 &\!-1\!\\
\hline
\end{array}\\
\\
D^2 (A\cdot B) &= \begin{array}{|C{6pt}|C{6pt}|C{6pt}|C{6pt}|C{6pt}|C{6pt}|C{6pt}|C{6pt}|C{6pt}|}
\hline
1 & 0 & 0 &\!-1\!& 0 & \!-1\!& 0 & 0 & 1\\
\hline
\end{array}
\end{align*}
\end{minipage}
\caption{On the left - a run in the pattern and a run in the text (both represented by black boxes) consisting
of the same character and a histogram of the matches they generate.
On the right - first derivates of the indicator arrays and second derivate of the match array, without padding zeroes.}
\end{figure}

We scan $T^\star$, and for every run of a particular light letter, we iterate through the
precomputed list of runs of this letter in $P^\star$. Observe that, given a run of the same letter in $P^\star$ and in $T^\star$, matches generated between $T^\star[u,v]$ and $P^\star[y,z]$ account for a piecewise linear function. More precisely, for all integer $u \le i \le v$ and $y \le j \le z$, we need to increase $A[i-j]$ by one.
To see that we can process pair of runs in constant time, we work with discrete derivates, instead of original arrays.

Given sequence $F$, we define its discrete derivate $DF$ as follow: $(DF)[i] = F[i]-F[i-1]$. Correspondingly, if we consider generating function $F(x) = \sum_i F[i] x^i$, then $(DF)(x) = F(x) \cdot (1-x)$ (for convenience, we assume that arrays are indexed from $-\infty$ to $\infty$). 

Now consider indicator sequences $T_{u,v}[i] = \mathbf{1}( u \le i \le v )$ and $P_{y,z}[j] = \mathbf{1}( -z \le j \le -y )$. To perform the update, we set $A[i+j]\ +\!\!= T_{u,v}[i] \cdot P_{y,z}[j]$ for all $i,j$, or simpler using generating functions:
\begin{equation}
\label{eq:update}
A(x)\ +\!\!= T_{u,v}(x) \cdot P_{y,z}(x),
\end{equation}
where $T_{u,v}(x) = \sum_{i=u}^{v} x^i$ and $P_{y,z}(x) = \sum_{j=y}^z x^{-j}$. However, we observe that $D T_{u,v}$ and $D P_{y,z}$ have particularly simple forms: $D T_{u,v}(x) = x^u - x^{v+1}$ and $D P_{y,z}(x) = x^{-z} - x^{-y+1}$. Thus it is easier to maintain second derivate of $A$, and \eqref{eq:update} becomes:
$$
D^2 A(x)\ +\!\!= x^{u-z} - x^{v-z+1} - x^{u-y+1} + x^{v-y+2}.
$$

All in all, we can maintain $D^2A$ in constant time per pair of runs, or in $\bigo(k \cdot t)$ total time, since every list of runs is of length at most $t$,
and there are at most $13k$ runs in $T^\star$. Additionally, in $\bigo(m)$ time we can compute
$A[0]$ and $A[1]$, allowing us to recover all other $A[i]$s from the formula $A[i] = (D^2A)[i] + 2A[i-1]-A[i-2]$.

Setting $t = \sqrt{m \log m}$ gives the total running time $\bigo(k \sqrt{m \log m})$ in both cases
as claimed.

\section{Lower bound}
Below we present a conditional lower bound, which expands upon an idea attributed to Indyk~\cite{Clifford}.
Main idea here is to use rectangular matrices instead of square, and use the padding accordingly. However, we pad using the same character in both text and pattern, increasing the number of mismatches only by a factor of 2.

Recall the combinatorial matrix multiplication conjecture stating that, for any $\varepsilon > 0$,
there is no \emph{combinatorial} algorithm for multiplying two $n\times n$ boolean matrices working
in time $\bigo(n^{3-\varepsilon})$. The following formulation is equivalent to this conjecture:
\begin{conjecture}[Combinatorial matrix multiplication]
\label{conj:cmm}
For any $\alpha,\beta,\gamma,\varepsilon > 0$, there is no combinatorial algorithm for multiplying
an $n^{\alpha} \times n^{\beta}$ matrix with an $n^{\beta} \times n^{\gamma}$ matrix in time
$\bigo(n^{\alpha+\beta+\gamma - \varepsilon})$.
\end{conjecture}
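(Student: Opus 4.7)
Conjecture~\ref{conj:cmm} is stated as an open assumption, so no unconditional proof is available; the plan is to \emph{justify} it by deriving it from the standard square combinatorial matrix multiplication conjecture. Concretely, I would show that any combinatorial algorithm $M$ that violates Conjecture~\ref{conj:cmm} for some $\alpha, \beta, \gamma, \varepsilon > 0$ can be converted, via a block-decomposition reduction, into a combinatorial algorithm for multiplying two $N \times N$ boolean matrices in time $\bigo(N^{3 - \varepsilon'})$ for some $\varepsilon' > 0$, contradicting the square conjecture. In the other direction the square conjecture is literally the specialisation $\alpha = \beta = \gamma = 1$ of Conjecture~\ref{conj:cmm}, so both assumptions stand or fall together.

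Assume $M$ multiplies an $n^\alpha \times n^\beta$ by an $n^\beta \times n^\gamma$ matrix in time $n^{\alpha + \beta + \gamma - \varepsilon}$. I would first handle the regime $\alpha, \beta, \gamma \le 1$ by setting $n = N$: the two $N \times N$ inputs decompose into grids of blocks of shapes $n^\alpha \times n^\beta$ and $n^\beta \times n^\gamma$, giving $N^{3 - \alpha - \beta - \gamma}$ block pairs, each multiplied by an invocation of $M$ at cost $n^{\alpha + \beta + \gamma - \varepsilon}$; the partial products are then combined by block-level boolean additions, for a total running time of $N^{3 - \varepsilon}$. In the complementary regime, where some exponent exceeds one, I would instead take $n = N^{1 / \max(\alpha, \beta, \gamma)}$, so that each of $n^\alpha, n^\beta, n^\gamma$ is at most $N$; after zero-padding the inputs to match the block grid, the same counting yields a combinatorial algorithm running in time $N^{3 - \varepsilon / \max(\alpha, \beta, \gamma)}$, again strictly subcubic.

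Every operation in the reduction -- block bookkeeping, zero-padding, invocations of $M$, and block-level boolean additions -- is combinatorial, so the resulting square algorithm is combinatorial as well and contradicts the standard conjecture, thereby establishing Conjecture~\ref{conj:cmm}. The main obstacle I foresee is purely technical: $\alpha, \beta, \gamma$ are real numbers, so $n^\alpha, n^\beta, n^\gamma$ must be rounded to integers and $N$ aligned with the resulting grid. Choosing $n$ as a suitably large integer power and rounding each block side upward (with zero-padding absorbed into the next block) folds the rounding slack into the constants and into the gap between $\varepsilon$ and $\varepsilon'$, without disturbing the combinatorial character of the reduction. With that technicality settled, the argument is immediate and Conjecture~\ref{conj:cmm} follows from the standard square combinatorial matrix multiplication conjecture.
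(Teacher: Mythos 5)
Your proposal is correct and matches the paper's own treatment: the statement is a conjecture, and the paper justifies its formulation only by remarking that it is equivalent to the standard (square) combinatorial matrix multiplication conjecture via cutting the matrices into blocks, which is exactly the block-decomposition reduction you carry out (with the reverse direction as the specialisation $\alpha=\beta=\gamma=1$). The only nitpick is that in the regime of small $\beta$ the block-level boolean additions cost $N^{3-\beta}$, so the resulting exponent is $3-\min(\varepsilon,\beta)$ rather than $3-\varepsilon$, but this is still strictly subcubic and your hedge ``for some $\varepsilon'>0$'' already covers it.
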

The equivalence can be seen by simply cutting the matrices into square block (in one direction)
or in rectangular blocks (in the other direction).

Now, consider two boolean matrices, $A$ of dimension $M' \times N$ and $B$ of dimension $N \times M$, for $ M' \ge M \ge N$. We encode $A$ as text $T$, by encoding elements row by row and adding some padding. Namely:
\[T =\#^{M^2} r_1 \ \#^{M-N+1}\ r_2\ \#^{M-N+1}\ \ldots\ \#^{M-N+1}\ r_{M'} \#^{M^2}\]
where $r_i = r_{i,1}\ldots r_{i,N}$ and $r_{i,j} = 0$ when $A_{i,j} = 0$ and $r_{i,j} = j$ when $A_{i,j} = 1$.
Similarly, we encode $B$ as $P$ column by column, using padding shorter by one character:
\[P = c_1 \ \#^{M-N}\ c_2\ \#^{M-N}\ \ldots\ \#^{M-N}\ c_M\]
where $c_j = c_{1,j}\ldots c_{N,j}$ and $c_{i,j} = 0'$ when $B_{i,j} = 0$ and $c_{i,j} = i$ when $B_{i,j} = 1$.

Observe that, since we encode $0$s from $A$ and $B$ using different symbols, and encoding of $1$s
is position-dependent, $r_i$ and $c_j$ will generate a match only if they are perfectly aligned and 
there is $k$ such that $r_{i,k} = c_{k,j}$, or equivalently $A_{i,k} = B_{k,j} = 1$. Since each block
(encoded row plus following padding) is either of length $N+1$ for rows or $N$ for columns, there
will be at most one pair row-column aligned for each pattern-text alignment. 

The total number of mismatches, for each alignment, is at most $2NM$ (since there are at most $MN$ non-$\#$ text characters that are aligned with pattern, and at most $MN$ non-$\#$ pattern characters). We can recover whether any given entry of
$A \cdot B$ is a $1$, since if so the number of mismatches for the corresponding alignment is
decreased by 1.

We have $|T| = \Theta(M' M)$ and $|P| = \Theta(M^2)$. By setting $M = \sqrt{m}$, $M' = \frac{n}{\sqrt{m}}$ and $N = \frac{k}{\sqrt{m}}$ we have the following: 

\begin{corollary}
For any positive $\varepsilon,\alpha,\kappa$, such that $\frac{1}{2}\alpha \le \kappa \le \alpha \le 1$
there is no \emph{combinatorial} algorithm solving pattern matching with $k=\Theta(n^\kappa)$
mismatches in time $\bigo((k\sqrt{m} \cdot n/m)^{1-\varepsilon})$ for a text of length $n$ and
a pattern of length $m=\Theta(n^\alpha)$, unless Conjecture~\ref{conj:cmm} fails.
\end{corollary}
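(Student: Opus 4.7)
The statement is a parameter-plugging corollary of the reduction already laid out in the excerpt, so the proof plan splits naturally into verifying that the construction is applicable at the chosen parameters and then reading off the matching matrix-multiplication lower bound. I would organize it as follows.

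First, I would instantiate the reduction preceding the corollary with $M=\sqrt{m}$, $M'=n/\sqrt{m}$, $N=k/\sqrt{m}$ (rounded to integers, which only affects constants) and check that these values satisfy the structural hypothesis $M'\ge M\ge N\ge 1$. Translating into exponents via $m=\Theta(n^\alpha)$ and $k=\Theta(n^\kappa)$, these inequalities become $\alpha\le 1$, $\kappa\le\alpha$, and $\kappa\ge\tfrac{1}{2}\alpha$, which are exactly the hypotheses of the corollary. I would then verify the promised size bounds: the padding $\#^{M^2}$ at both ends plus $M'$ row-blocks of length $M+1$ give $|T|=\Theta(M^2+M'M)=\Theta(m+n)=\Theta(n)$, and the $M$ column-blocks of length $M$ in $P$ give $|P|=\Theta(M^2)=\Theta(m)$.

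Next, I would invoke the correctness argument spelled out above the corollary: across every alignment, at most one row block of $T$ and one column block of $P$ can overlap, and the number of non-$\#$ characters in either the aligned prefix of $P$ or the overlapping region of $T$ is $O(NM)$. Consequently every alignment has at most $2NM=2k$ mismatches, and the entry $(A\cdot B)_{i,j}$ is recovered from whether the mismatch count at the appropriate alignment drops by the corresponding ``$1$''. Since the problem is defined with a threshold, I would either run the assumed algorithm with parameter $2k$ (absorbed in the $\Theta$) or note that exact counts below the threshold suffice.

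Finally, I would argue by contraposition. Suppose a combinatorial algorithm solves $k$-mismatch pattern matching in $\bigo((k\sqrt{m}\cdot n/m)^{1-\varepsilon})=\bigo((kn/\sqrt{m})^{1-\varepsilon})$ time on these instances. Since $M'\cdot N\cdot M = (n/\sqrt{m})(k/\sqrt{m})\sqrt{m} = kn/\sqrt{m}$, this would multiply an $n^{1-\alpha/2}\times n^{\kappa-\alpha/2}$ boolean matrix by an $n^{\kappa-\alpha/2}\times n^{\alpha/2}$ boolean matrix in time $\bigo(n^{(1-\alpha/2+\kappa-\alpha/2+\alpha/2)(1-\varepsilon)})=\bigo(n^{1+\kappa-\alpha/2-\varepsilon'})$ for some $\varepsilon'>0$, directly contradicting Conjecture~\ref{conj:cmm} applied with $(a,b,c)=(1-\alpha/2,\kappa-\alpha/2,\alpha/2)$.

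The only real subtlety—rather than obstacle—is keeping straight the dictionary between the matrix-multiplication exponents and the pattern-matching exponents $\alpha,\kappa$, and confirming that the parameter constraints in the corollary are exactly what is needed for $M'\ge M\ge N\ge 1$ and for all three induced exponents to be nonnegative so that Conjecture~\ref{conj:cmm} applies.
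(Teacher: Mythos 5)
Your proposal is correct and matches the paper's own argument: the paper proves this corollary precisely by plugging $M=\sqrt{m}$, $M'=n/\sqrt{m}$, $N=k/\sqrt{m}$ into the reduction described just before the statement, noting $|T|=\Theta(n)$, $|P|=\Theta(m)$, mismatch count $\bigo(NM)=\bigo(k)$, and that the forbidden running time $(k\sqrt{m}\cdot n/m)^{1-\varepsilon}=(M'NM)^{1-\varepsilon}$ would violate Conjecture~\ref{conj:cmm} for the dimensions $n^{1-\alpha/2}\times n^{\kappa-\alpha/2}\times n^{\alpha/2}$. Your verification of $M'\ge M\ge N\ge 1$ against the hypotheses $\tfrac12\alpha\le\kappa\le\alpha\le 1$ and the contrapositive step are exactly the (implicit) content of the paper's proof, so there is nothing genuinely different to compare.
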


If we denote by $\omega(\alpha,\beta,\gamma)$ the exponent of fastest algorithm to multiply
a matrix of dimension $n^{\alpha} \times n^{\beta}$ with a matrix of dimension
$n^{\beta} \times n^{\gamma}$, we have:

\begin{corollary}
For any positive $\varepsilon,\alpha,\kappa$, such that $\frac{1}{2}\alpha \le \kappa \le \alpha \le 1$
there is no algorithm solving pattern matching with $\Theta(n^\kappa)$ mismatches in time
$\bigo(n^{\omega(2-\alpha,2\kappa-\alpha,\alpha)/2-\varepsilon})$ for a text of length $n$ and a
pattern of length $\Theta(n^\alpha)$.
\end{corollary}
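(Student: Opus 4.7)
The plan follows the same reduction scheme used for the previous corollary, but with parameters recalibrated so that the matrix-multiplication instance has dimensions whose exponents in the base $n^{1/2}$ are precisely $2-\alpha$, $2\kappa-\alpha$, $\alpha$. Concretely, I would set $M=n^{\alpha/2}$, $M'=n^{1-\alpha/2}$, $N=n^{\kappa-\alpha/2}$. Then $m=|P|=\Theta(M^2)=\Theta(n^\alpha)$, $|T|=\Theta(M'M)=\Theta(n)$, and the number of mismatches in any alignment is at most $2NM=2n^{\kappa}=\Theta(n^\kappa)$, matching the stated threshold $k$. The ordering constraint $M'\ge M\ge N$ from the construction translates into $\alpha\le 1$ and $\tfrac{1}{2}\alpha\le\kappa\le\alpha$, which is precisely the hypothesis of the corollary, so every admissible $(\alpha,\kappa)$ in the statement corresponds to a valid instance of the reduction.

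With these parameters, the construction encodes the product of an $M'\times N$ and an $N\times M$ boolean matrix into a single $k$-mismatch pattern-matching instance on strings of lengths $\Theta(n)$ and $\Theta(n^\alpha)$; a linear-time sweep over the output then recovers every entry of the product from the mismatch counts (each $1$-entry of $A\cdot B$ decrements the mismatch count of the corresponding alignment by exactly one, as in the argument for the previous corollary). Taking $n^{1/2}$ as the ambient variable, the matrix dimensions read $(n^{1/2})^{2-\alpha}\times(n^{1/2})^{2\kappa-\alpha}$ and $(n^{1/2})^{2\kappa-\alpha}\times(n^{1/2})^{\alpha}$, and by the very definition of $\omega(\cdot,\cdot,\cdot)$ their product cannot be computed in time asymptotically faster than $(n^{1/2})^{\omega(2-\alpha,2\kappa-\alpha,\alpha)}=n^{\omega(2-\alpha,2\kappa-\alpha,\alpha)/2}$. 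Hence a pattern-matching algorithm running in time $\bigo(n^{\omega(2-\alpha,2\kappa-\alpha,\alpha)/2-\varepsilon})$ would give a strictly faster rectangular matrix-multiplication algorithm, which is impossible.

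The only subtlety I anticipate is the bookkeeping around the padding lengths and the ordering conditions $M'\ge M\ge N$: one must verify that the $\#^{M^2}$ padding is absorbed by the $\Theta(n)$ bound on $|T|$ (which holds since $M^2=n^\alpha\le n$), and that the three inequalities arising from $M'\ge M\ge N$ coincide exactly with the numerical hypotheses $\alpha\le 1$ and $\tfrac{1}{2}\alpha\le\kappa\le\alpha$. Both verifications are elementary arithmetic on the exponents; once they are in place, the corollary becomes a direct rewriting of the preceding corollary with the combinatorial matrix-multiplication conjecture replaced by the definition of the rectangular exponent $\omega(\cdot,\cdot,\cdot)$.
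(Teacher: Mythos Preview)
Your proposal is correct and follows exactly the approach the paper intends: the paper does not give a separate proof for this corollary but simply states it after the parameter substitution $M=\sqrt{m}$, $M'=n/\sqrt{m}$, $N=k/\sqrt{m}$, which is identical to your choice once one writes $m=n^{\alpha}$ and $k=n^{\kappa}$. Your verification that the constraints $M'\ge M\ge N$ match $\tfrac{1}{2}\alpha\le\kappa\le\alpha\le 1$ and your appeal to the definition of $\omega(\cdot,\cdot,\cdot)$ are precisely what the paper leaves implicit.
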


\bibliographystyle{plain}
\bibliography{bib}


\end{document}